\documentclass[11pt]{article}
\usepackage[T1]{fontenc}

\usepackage[a4paper,top=3cm,bottom=2cm,left=3cm,right=3cm,marginparwidth=1.75cm]{geometry}

\usepackage[english]{babel}
\usepackage[utf8x]{inputenc}
\usepackage{amsmath}
\usepackage{graphicx}
\usepackage[colorinlistoftodos]{todonotes}
\usepackage{hyperref} 
\usepackage{color}
\usepackage{amssymb}
\usepackage[caption=false]{subfig}
\hypersetup{
    colorlinks,
    citecolor=blue,
    filecolor=blue,
    linkcolor=blue,
    urlcolor=blue
}
\usepackage{multirow}

\usepackage{adjustbox}
\usepackage{makecell}

\usepackage{natbib}
\bibliographystyle{biom}
\setcitestyle{authoryear,aysep={,}}
\usepackage{amsthm}

\newtheorem{thm}{Theorem}
\newtheorem{lemma}{Lemma}
\newtheorem{definition}{Definition}
\newtheorem{cor}{Corollary}

\begin{document}
\title{Identifying Heritable Communities of Microbiome by Root-Unifrac and Wishart Distribution}
\date{}
\author{Yunfan Tang and Dan L. Nicolae \\
{\normalsize\emph{University of Chicago }}}

\maketitle

\begin{abstract}
We introduce a method to identify heritable microbiome communities when the input is a pairwise dissimilarity matrix among all samples. Current methods target each taxon individually and are unable to take advantage of their phylogenetic relationships. In contrast, our approach focuses on community heritability by using the root-Unifrac to summarize the microbiome samples through their pairwise dissimilarities while taking the phylogeny into account. The resulting dissimilarity matrix is then transformed into an outer product matrix and further modeled through a Wishart distribution with the same set of variance components as in the univariate model. Directly modeling the entire dissimilarity matrix allows us to bypass any dimension reduction steps. An important contribution of our work is to prove the positive definiteness of such outer product matrix, hence the applicability of the Wishart distribution. Simulation shows that this community heritability approach has higher power than existing methods to identify heritable groups of taxa. Empirical results on the TwinsUK dataset are also provided. 


\end{abstract}

\section{Introduction}
The human microbiome refers to the entire collection of microorganisms living inside the human host. Recently, a number of studies have established the association of microbiome, especially gut microbiome, with our metabolic and immune system \citep{cho2012human,gevers2012human,huttenhower2012structure}. Microbiome is known to be shaped by environmental factors such as age, hygiene and life style. On the other hand, genetic variation can lead to differences in food preferences, enzyme activity or immune response, hence having a nontrivial impact on microbial compositions. Therefore, it is of great scientific interest to identify taxa with overall variation significantly contributed from genetics. The advent of 16s rRNA sequencing platforms has led to much lower sequencing cost with increased resolution. This enables researchers to carry out large cohort studies that has discovered many heritable taxa \citep{goodrich2014human, goodrich2016genetic}. 

Most of the current heritability studies are based on variance component models. Given a univariate trait and its variance decomposition over genetic and environment components, heritability is defined as the proportion of total variance that can be explained by the genetic variance. One widely used variance component model is the Additive Genetics, Common Environment, Unique Environment (ACE) model \citep{eaves1978model}. The ACE variance component model has long been used in familial studies and recently applied to unrelated individuals  \citep{yang2011gcta} for a number of univariate traits such as height \citep{yang2010common}, inflammatory bowel diseases \citep{chen2014estimation} and diabetes \citep{bonnefond2015rare}. Several studies to identify heritable group of bacterial taxa \citep{goodrich2014human,davenport2015genome,goodrich2016genetic} follow this line of work by using the sum of the relative abundance over a group of taxa, after proper transformation, as the univariate response. There are also studies that extend the variance component model to multivariate case such as finding the linear combination of traits that maximizes heritability \citep{oualkacha2012principal} or using weighted average of heritabilities of each individual trait \citep{ge2016multidimensional}. 

Unfortunately, none of the aforementioned methods is designed for one of the most unique characteristics in microbial data, which is the phylogenetic tree among all taxa. This phylogenetic tree can be used to construct an ecologically meaningful way to summarize dissimilarities between each pair of microbial communities. Such dissimilarity, also called beta diversity, is sensitive to the evolutionary relatedness among the set of taxa that are responsible for the overall variation, and can potentially lead to increased power to detect heritable communities. Its focus on community level rather than individual taxon is also recommended in \cite{van2015rethinking}, which argues that it better captures the interaction of the entire microbiome with the human host and leads to more reasonable interpretation of heritability. Consequently, we aim to design a heritability model that takes these pairwise dissimilarities as input, essentially answering whether genetically similar subjects carry phylogenetically similar microbial communities.

The focus of this paper is on detection of heritable communities rather than on quantifying the precise amount of variability due to genetic similarity. We believe that quantification is less relevant for microbiome phenotypes than for other human traits because the environmental factors affecting composition vary dramatically even within racially homogeneous populations. In addition, different taxa have different susceptibility to environmental and genetic effects, but there has been no consensus on the right way to calculate average heritability on a community of taxa. Results from powerful methods for detecting heritable communities can be used for designing efficient follow-up studies for investigating the molecular mechanisms of host-microbiome interactions. This is the reason for emphasizing power in our simulation studies.

Estimating community heritability based on beta diversity has a major difficulty in statistical modeling since the response variable is a matrix of pairwise dissimilarities. To comply with the traditional heritability models, ordination methods such as non-metric multidimensional scaling (NMDS) and principal coordinate analysis (PCoA) must be applied to find the univariate representation that best preserves the original pairwise distance. Obviously, different ordination standards can lead to different heritability results. In addition, the recovered univariate response usually represents only a fraction of total variation in the dissimilarity matrix and has unclear biological meanings. These difficulties altogether point to the necessity of a statistical model capable of decomposing total variation in a dissimilarity matrix without any transformation.

A crucial property of a dissimilarity or distance matrix, as pointed out by \cite{gower1966some}, is that it can be transformed into an outer product matrix. This outer product matrix can be conveniently modeled by the Wishart distribution, which has a straightforward analogy to the univariate ACE model, hence definition of heritability, by imposing a similar additive form on the covariance matrix parameter. However, Wishart distribution is only applicable when the response is a positive definite matrix, a requirement not satisfied by most beta diversities. A major contribution of this paper is that we prove this property for a particular beta diversity, the square root transformation of weighted Unifrac. Unifrac \citep{lozupone2005unifrac, lozupone2007quantitative} incorporates phylogenetic information among bacterial species and has been extensively applied to a number of microbiome studies. To the best of our knowledge, no prior work exists to model the entire variation in Unifrac matrix for heritability analysis. 

The rest of this paper is organized as follows. Section \ref{sec:Wishart} introduces the Wishart model with ACE variance components. Section \ref{sec:community heritability} proves that the square root of weighted Unifrac is applicable for the Wishart distribution. Section \ref{sec:simulation} compares the power of detecting non-zero heritability between our method and other current methods in simulation. Section \ref{sec:TwinsUK} provides empirical results using TwinsUK fecal microbiome data. Section \ref{sec:discussion} concludes this paper with further discussions.

\section{Wishart distribution with variance components} \label{sec:Wishart}
We start from reviewing the ACE variance component model (A for additive genetics, C for common environment and E for unique environment) for heritability analysis on univariate traits \citep{eaves1978model}. This model assumes additive random effects from genetic factors, common environments and unique environments. Let $n$ be the number of samples, $\boldsymbol y$ be an $n \times 1$ vector of their univariate traits, $\boldsymbol X$ be a $n \times m$ matrix of covariates such as the intercept, age, sex and weight, and $\boldsymbol \beta$ be an $m \times 1$ vector of fixed effects. Furthermore, let $\boldsymbol A$ be an $n \times n$ genetic relationship matrix (GRM), $\boldsymbol C$ be an $n \times n$ matrix that quantifies shared environments, and $\boldsymbol E = \boldsymbol I_n$ be an $n \times n$ identity matrix for the unique environment effects. The GRM quantifies additive genetic covariance among individuals. In familial studies, $\boldsymbol A$ is twice the kinship matrix. For example, $\boldsymbol A_{i,j} = 1$ for monozygotic twins and $\boldsymbol A_{i,j} = 1/2$ for dizygotic twins. Furthermore, $\boldsymbol C_{i,j} = 1$ if and only if $i$th and $j$th individual share the same household. The diagonal entries of $\boldsymbol A$ and $\boldsymbol C$ are all set to one. In genome wide association studies on unrelated individuals, $\boldsymbol A$ can be estimated by SNP data \citep{yang2011gcta} and the shared environment matrix is usually omitted. 

The ACE variance component model takes the following form:
\begin{equation} \label{ACE}
\boldsymbol y = \boldsymbol X \boldsymbol\beta + \boldsymbol g + \boldsymbol c + \boldsymbol e, \hspace{4mm} \boldsymbol g \sim N(0, \sigma^2_A \boldsymbol A), \hspace{4mm} \boldsymbol c \sim N(0, \sigma^2_C \boldsymbol C), \hspace{4mm}  \boldsymbol e \sim N(0, \sigma^2_E \boldsymbol E) 
\end{equation}
where $\boldsymbol g$, $\boldsymbol c$ and $\boldsymbol e$ are assumed to be mutually independent.

Heritability ($h$) is defined as the proportion of total variance that is due to genetic factors:
\begin{equation} \label{heritability}
h = \frac{\sigma^2_A}{\sigma^2_A + \sigma^2_C + \sigma^2_E}
\end{equation}

A common approach to estimate $\sigma^2 = (\sigma^2_A, \sigma^2_C, \sigma^2_E)$ and hence $h$ uses residual maximum likelihood (REML) \citep{yang2011gcta,zhou2012genome}. Let $\boldsymbol L$ be the $(n-m) \times n$ matrix with its rows spanning the kernel space of $\boldsymbol X'$. Left multiplying (\ref{ACE}) by $\boldsymbol L$ leads to $\boldsymbol{Ly} \sim N(\boldsymbol 0, \boldsymbol{L\Sigma L'})$ where $\boldsymbol\Sigma =  \sigma^2_A \boldsymbol A + \sigma^2_C \boldsymbol C + \sigma^2_E \boldsymbol E$, after which one maximizes its likelihood to obtain REML estimates $\hat \sigma^2$ and $\hat h$. The REML likelihood takes the following form:
\begin{equation} \label{normal_REML}
l(\sigma^2; \boldsymbol y) = -\frac{n-m}{2}\log(2\pi) - \frac{1}{2}\log|\boldsymbol{L\Sigma L'}| - \frac{1}{2} \boldsymbol{y'L'\Sigma^{-1}Ly}
\end{equation}

We are interested in extending the ACE framework to the case where we can only observe an outer product matrix or covariance matrix, instead of the raw values of the univariate traits. To start, notice that the ACE model implies that
\begin{equation} \label{expectation_outer}
E(\boldsymbol {Lyy}'\boldsymbol L') = \boldsymbol L \boldsymbol \Sigma \boldsymbol L'
\end{equation}
where $\boldsymbol{yy'}$ serves as a sample outer product matrix. Now suppose we can only observe an outer product matrix $\boldsymbol M$ but not $\boldsymbol y$. This happens when one analyze a dataset by measuring its pairwise dissimilarities and apply principal coordinate analysis (details provided in the next section). Since both $\boldsymbol{yy}'$ and $\boldsymbol M$ have the same interpretation, an analogy to (\ref{expectation_outer}) would be
\begin{equation} \label{expectation_outer_M}
E(\boldsymbol {LML'}) = \boldsymbol L \boldsymbol \Sigma \boldsymbol L'
\end{equation}
A similar analogy is used by \cite{mcardle2001fitting} to derive a pseudo F-statistic to test fixed effects when the observation is a pairwise dissimilarity matrix. The effect of $\boldsymbol L$ is, similar to the univariate case, to remove the fixed effects in the outer product matrix $\boldsymbol M$. One can interpret this by expanding $\boldsymbol M$ into the sum of rank 1 matrices: $\boldsymbol M = \sum_{i=1}^{\text{rank}(M)} \boldsymbol M_{i} \boldsymbol M_{i}' $ with $\boldsymbol M_{i} \in \mathbb R^n$, and imposing that $E(\boldsymbol M_{i}) = \boldsymbol{X\beta}_i $.

In particular, (\ref{expectation_outer_M}) suggests that we can use a Wishart distribution to model $\boldsymbol Z =  \boldsymbol{LML}'$: $\boldsymbol Z \sim W(\boldsymbol{L\Sigma L'}/q, q)$ in order to align with the form of expectation in (\ref{expectation_outer_M}). Without affecting heritability estimates, we can further remove $q$ from the scale matrix and simply write $\boldsymbol Z \sim W(\boldsymbol{L\Sigma L'}, q)$, which gives the following log likelihood: 

\begin{align} 
l(q, \sigma^2; \boldsymbol Z) &= -\frac{q}{2}\log|\boldsymbol{L\Sigma L'}| - \frac{1}{2}\text{tr}\big((\boldsymbol{L\Sigma L'})^{-1} \boldsymbol Z\big) + \frac{q-(n-m)-1}{2}\log|\boldsymbol Z|  \nonumber \\
& \hspace{4.5mm} - \frac{q(n-m)}{2}\log 2 - \log \Gamma_{n-m}(\frac{q}{2})  \label{Wishart_REML}
\end{align}
where $\Gamma_{n-m}(\cdot)$ is the multivariate gamma function and $q$ can be any real number larger than $n-m-1$. Maximizing (\ref{Wishart_REML}) leads to the MLEs $(\hat q, \hat \sigma^2)$ and hence $\hat h$ from (\ref{heritability}) by using $\hat \sigma^2$. The gradient of (\ref{Wishart_REML}) with respect to $\sigma^2$ is very similar to the case of (\ref{normal_REML}), and the partial derivate of $q$ is straightforward to obtain. We use gradient based optimization, such as L-BFGS \citep{liu1989limited}, to obtain the MLEs $(\hat q, \hat \sigma^2)$.

The log likelihood (\ref{Wishart_REML}) is only applicable when $\boldsymbol Z$ is positive definite. The next section will prove this condition when $\boldsymbol Z$ is calculated from a particular microbiome beta diversity metric.

\section{Community heritability by root-Unifrac and Wishart distribution}\label{sec:community heritability}
Unifrac \citep{lozupone2005unifrac,lozupone2007quantitative} is one of the most popular metrics to quantify pairwise dissimilarities among microbial communities. A common way to incorporate Unifrac into the ACE model (\ref{ACE}) is to apply principal coordinate analysis (PCoA) on the $n \times n$ Unifrac dissimilarity matrix and then use each of the principal eigenvectors separately as a univariate response  \citep{goodrich2014human,goodrich2016genetic,quigley2017heritability}. Specifically, let $u(i,j)$ be the Unifrac dissimilarity between $i$th and $j$th sample, and $\boldsymbol D$ be an $n \times n$ matrix satisfying $\boldsymbol D_{i,j} = -u(i,j)^2/2$. Also, define $\boldsymbol J = \boldsymbol I_n - \boldsymbol 1_n \boldsymbol 1_n'/n$ where $\boldsymbol 1_n$ is a unit vector of length $n$. PCoA first calculates Gower's centered matrix as $\boldsymbol M = \boldsymbol {JDJ}$ \citep{gower1966some} , which turns a dissimilarity matrix into a centered outer product matrix. This is because if $u(i,j)$ happens to be the Euclidean distance between the pair of vectors $\boldsymbol c_i$ and $\boldsymbol c_j$ for $1\leq i \leq n$, then it is easy to deduce that
\begin{equation*}
\boldsymbol M = \boldsymbol {JCC'J} , \hspace{4mm} \boldsymbol C = \begin{pmatrix}
\boldsymbol c_1' \\
\boldsymbol c_2' \\
... \\
\boldsymbol c_n' \\
\end{pmatrix}
\end{equation*}
After this step, one applies eigenvalue decomposition on $\boldsymbol M$ to obtain the principal eigenvectors. Each of the principal eigenvectors is separately used as a univariate trait in (\ref{normal_REML}).

Although the principal eigenvectors of $\boldsymbol M$ can quantify community information to some extent, they are hard to interpret and usually express only a fraction of the total variation in the Unifrac matrix. An alternative is to directly use $\boldsymbol M$ as the observation, which has been applied to nonparametric testing of fixed effects \citep{mcardle2001fitting} and  used in a number of microbial studies \citep{chen2012associating,wang2016genome}. Given the nature of $\boldsymbol M$ being an outer product matrix, it is reasonable to model $\boldsymbol{Z = LML'}$ as generated from a Wishart distribution, allowing us to maximize (\ref{Wishart_REML}) to obtain the heritability estimate. However, the central difficulty is that $\boldsymbol{Z}$ may not be positive definite and therefore its log determinant in (\ref{Wishart_REML}) can be undefined.

In this section, we present our result stating that using the square root transformation of the weighted Unifrac \citep{lozupone2007quantitative}, which we call root-Unifrac, will guarantee that $\boldsymbol{LML'}$ is positive definite under a mild condition. For 16S rRNA data clustered into operational taxonomic units (OTUs), suppose there are $R$ OTUs in total and let $\boldsymbol x_i = (x_{i1}, x_{i2}, ..., x_{iR})$ denote the number of sequences belonging to each of the $R$ OTUs in the $i$th microbial sample. The OTU relative abundance is calculated as $\boldsymbol \theta_i = \boldsymbol x_i / \sum_{r=1}^{R} x_{ir}$ where $\boldsymbol \theta_i = (\theta_{i1}, ..., \theta_{iR})$. Now suppose that we have a rooted phylogenetic tree with $K$ branches. Let $b_k$ be the length of $k$th branch and 
$p_{i,k}$ be the sum of $\theta_{ir}$ over all $r$'s that are under the $k$th branch.
The root-Unifrac is defined as
\begin{equation}\label{root-Unifrac}
u(i,j) = \sqrt{\sum_{k=1}^K b_k |p_{i,k} - p_{j,k} |}
\end{equation}
Similar to the original Unifrac, the root-Unifrac takes phylogenetic information and account for taxa relatedness while comparing different communities. It is simple to show that the root-Unifrac satisfies non-negativity, symmetry and triangle inequality. Therefore, we can define a finite metric space $(\Omega, u)$ where $\Omega = (\omega_1,\omega_2,...,\omega_n)$ correspond to the $n$ microbial samples and $u(\omega_i, \omega_j) = u(i,j)$ according to (\ref{root-Unifrac}). We prove the positive definiteness of $\boldsymbol{LML}'$ by showing that $(\Omega,u)$ has an isometric embedding into an Euclidean space with dimension at least $n-1$, as long as there exists a $k^*$ such that $\{p_{i,k^*}\}_i$ are all different. 

Our main results are the following:
\begin{thm} \label{thm1}
Let $u(i,j) = \sqrt{\sum_{k=1}^K b_k |p_{i,k} - p_{j,k}|}$ and define an $n\times n$ matrix $\boldsymbol D$ satisfying $\boldsymbol D_{i,j} = -u(i,j)^2/2$. Suppose that $\exists k^* \in \{1,2,...,K\}$ such that $\{p_{i,k^*}\}_i$ are all different. Then the Gower's centered matrix, $\boldsymbol{M=JDJ}$, is positive semidefinite with rank $n-1$.
\end{thm}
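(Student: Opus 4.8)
The plan is to realize $u$ as a genuine Euclidean (Hilbert-space) distance and then invoke the classical equivalence between Euclidean embeddability and positive semidefiniteness of the Gower matrix, reducing the rank claim to an affine-independence statement in which the hypothesis on $k^*$ enters.

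First I would treat a single branch. Since each $p_{i,k}\in[0,1]$, define the step function $\phi_k(p)=\sqrt{b_k}\,\mathbb{1}_{[0,p]}\in L^2([0,1])$ and observe that
\[
\|\phi_k(p_{i,k})-\phi_k(p_{j,k})\|_{L^2}^2 = b_k\,|p_{i,k}-p_{j,k}|,
\]
because the integrand equals $b_k$ exactly on the interval between $p_{i,k}$ and $p_{j,k}$ and vanishes elsewhere. Stacking these maps over $k$ into $\Phi(\omega_i)=(\phi_1(p_{i,1}),\dots,\phi_K(p_{i,K}))$ in the direct sum $\bigoplus_{k=1}^K L^2([0,1])$ gives
\[
\|\Phi(\omega_i)-\Phi(\omega_j)\|^2=\sum_{k=1}^K b_k\,|p_{i,k}-p_{j,k}|=u(i,j)^2,
\]
so $(\Omega,u)$ embeds isometrically into a Hilbert space; since there are only $n$ points, their images lie in a finite-dimensional Euclidean subspace, and I may take $\boldsymbol c_1,\dots,\boldsymbol c_n$ to be these images. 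This is the concrete version of the classical fact that $\sqrt{|x-y|}$ is of negative type.

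Next I would pass from the embedding to $\boldsymbol M$. Writing $\bar{\boldsymbol c}=\tfrac1n\sum_i \boldsymbol c_i$ and $\tilde{\boldsymbol C}$ for the matrix whose $i$th row is $(\boldsymbol c_i-\bar{\boldsymbol c})'$, the standard Gower identity gives $\boldsymbol M=\boldsymbol{JDJ}=\tilde{\boldsymbol C}\tilde{\boldsymbol C}'$, so $\boldsymbol M$ is automatically positive semidefinite and $\mathrm{rank}(\boldsymbol M)=\mathrm{rank}(\tilde{\boldsymbol C})$, which equals the dimension of the affine hull of $\{\boldsymbol c_i\}$. Because $\boldsymbol J\boldsymbol 1_n=\boldsymbol 0$ we always have $\boldsymbol M\boldsymbol 1_n=\boldsymbol 0$ and hence $\mathrm{rank}(\boldsymbol M)\le n-1$; the real content of the theorem is therefore the reverse inequality, namely that the $\boldsymbol c_i$ are affinely independent. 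For this, the key step—and the one place the hypothesis is used—is to project onto the $k^*$ coordinate. It suffices to show the functions $\{\mathbb{1}_{[0,p_{i,k^*}]}\}_{i=1}^n$ are affinely independent, since affine independence of a linear image forces affine independence of the $\boldsymbol c_i$ themselves. Ordering the distinct values $p_{i,k^*}$ increasingly and peeling off the top interval one at a time shows these indicators are linearly independent whenever all $p_{i,k^*}$ are positive; the constraint $\sum_i\lambda_i=0$ then disposes of the single possible zero value, giving affine independence in all cases. Combined with the semidefiniteness from the embedding, this yields $\mathrm{rank}(\boldsymbol M)=n-1$.

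I expect the embedding and the Gower identity to be routine; the main obstacle is the rank-exactness argument, specifically making the affine-independence reduction precise—justifying that it is enough to work in the single coordinate $k^*$, and correctly handling the degenerate case in which one of the $p_{i,k^*}$ equals $0$, where the corresponding indicator vanishes in $L^2$ and only the affine, not the linear, version of independence survives.
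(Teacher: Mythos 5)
Your proposal is correct, and it reaches the conclusion by a genuinely different route from the paper. The paper also reduces to an isometric Euclidean embedding, but it gets there branch by branch via Morgan's flatness criterion: for each $k$ it forms the matrices $\boldsymbol V$ of Definition \ref{def_Morgan}, sorts the $p_{i,k}$, exhibits a block structure, and invokes Lemma \ref{lemma1} (positivity of the determinant of the matrix with entries $a_{\max(i,j)}$) to show every such determinant is nonnegative and strictly positive for the full tuple at $k^*$; it then concatenates the per-branch embeddings. You instead write down the embedding explicitly, $\phi_k(p)=\sqrt{b_k}\,\mathbb{1}_{[0,p]}\in L^2([0,1])$, which realizes $b_k|p_{i,k}-p_{j,k}|$ as a squared Hilbert-space distance in one line (the classical negative-type fact for $\sqrt{|x-y|}$), and you convert the rank claim into affine independence of the indicators $\mathbb{1}_{[0,p_{i,k^*}]}$, proved by peeling off the top interval. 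This is shorter, avoids the determinant lemma and the sorting/case analysis entirely, and makes transparent exactly where the hypothesis on $k^*$ enters; you also correctly isolate the one degenerate case ($p_{i,k^*}=0$, where the indicator vanishes in $L^2$) and dispose of it with the affine constraint $\sum_i\lambda_i=0$ --- a case the paper sidesteps implicitly because its matrix entries are differences $|p_{\tau(i)+1,k}-p_{1,k}|$ from a base point, all strictly positive under the distinctness hypothesis. What the paper's route buys in exchange is a reusable quantitative criterion (the determinants of $\boldsymbol V$) for deciding flatness and embedding dimension of more general dissimilarities; what yours buys is brevity and a clear view of the underlying Hilbert-space geometry. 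Both correctly finish with the same Gower identity $\boldsymbol M=\tilde{\boldsymbol C}\tilde{\boldsymbol C}'$ and the observation $\boldsymbol M\boldsymbol 1_n=\boldsymbol 0$.
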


\begin{cor} \label{cor1}
Assume that the covariate matrix $\boldsymbol X$ includes the intercept. Under the existence of $k^*$ in Theorem \ref{thm1}, $\boldsymbol{LML'}$ is positive definite. 
\end{cor}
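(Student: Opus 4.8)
The plan is to reduce the positive definiteness of $\boldsymbol{LML'}$ to a statement about how the row space of $\boldsymbol L$ sits relative to the one-dimensional kernel of $\boldsymbol M$. Theorem~\ref{thm1} gives that $\boldsymbol{M} = \boldsymbol{JDJ}$ is positive semidefinite with rank $n-1$, so its kernel is one-dimensional. First I would pin down that kernel explicitly: since $\boldsymbol J \boldsymbol 1_n = \boldsymbol 1_n - \boldsymbol 1_n(\boldsymbol 1_n' \boldsymbol 1_n)/n = \boldsymbol 0$, we get $\boldsymbol M \boldsymbol 1_n = \boldsymbol{JDJ}\boldsymbol 1_n = \boldsymbol 0$, so $\boldsymbol 1_n \in \ker(\boldsymbol M)$; combined with $\mathrm{rank}(\boldsymbol M) = n-1$ this forces $\ker(\boldsymbol M) = \mathrm{span}(\boldsymbol 1_n)$.

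Next I would use the intercept assumption to locate the row space of $\boldsymbol L$. Because $\boldsymbol X$ contains the intercept column, $\boldsymbol 1_n \in \mathrm{col}(\boldsymbol X)$. The rows of $\boldsymbol L$ span $\ker(\boldsymbol X')$, which is the orthogonal complement of $\mathrm{col}(\boldsymbol X)$; hence every row of $\boldsymbol L$ is orthogonal to $\boldsymbol 1_n$, i.e. $\boldsymbol L \boldsymbol 1_n = \boldsymbol 0$. Equivalently, any vector of the form $\boldsymbol w = \boldsymbol L' \boldsymbol v$ satisfies $\boldsymbol 1_n' \boldsymbol w = (\boldsymbol L \boldsymbol 1_n)' \boldsymbol v = 0$.

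The heart of the argument is then a positivity check on the quadratic form. Fix any nonzero $\boldsymbol v \in \mathbb{R}^{n-m}$ and set $\boldsymbol w = \boldsymbol L' \boldsymbol v$. Since $\boldsymbol X$ has full column rank $m$, the matrix $\boldsymbol L$ has full row rank $n-m$, so $\boldsymbol L'$ is injective and $\boldsymbol w \neq \boldsymbol 0$. Now $\boldsymbol w$ is a nonzero vector orthogonal to $\boldsymbol 1_n$, so it cannot lie in $\mathrm{span}(\boldsymbol 1_n) = \ker(\boldsymbol M)$, because the only element of that span orthogonal to $\boldsymbol 1_n$ is $\boldsymbol 0$. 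As $\boldsymbol M$ is positive semidefinite with kernel exactly $\mathrm{span}(\boldsymbol 1_n)$, a vector outside the kernel yields strict positivity: $\boldsymbol v' \boldsymbol{LML'} \boldsymbol v = \boldsymbol w' \boldsymbol M \boldsymbol w > 0$. Since $\boldsymbol v$ was an arbitrary nonzero vector, $\boldsymbol{LML'}$ is positive definite.

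There is no serious obstacle once Theorem~\ref{thm1} is in hand; the corollary hinges on the clean interaction between two facts, namely that the kernel of $\boldsymbol M$ is precisely the constant direction $\boldsymbol 1_n$, and that the intercept in $\boldsymbol X$ forces $\boldsymbol L$ to annihilate that very direction while remaining injective elsewhere. The one point deserving care is confirming that the row space of $\boldsymbol L$ meets $\ker(\boldsymbol M)$ only at the origin; this is exactly where the intercept assumption is indispensable, since without $\boldsymbol 1_n \in \mathrm{col}(\boldsymbol X)$ the vector $\boldsymbol L' \boldsymbol v$ could acquire a component along $\boldsymbol 1_n$ and make the quadratic form vanish.
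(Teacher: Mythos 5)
Your proof is correct and follows essentially the same route as the paper's: identify $\ker(\boldsymbol M)=\mathrm{span}(\boldsymbol 1_n)$ via positive semidefiniteness and rank $n-1$ from Theorem \ref{thm1}, observe that the intercept forces $\mathrm{im}(\boldsymbol L')\subset \boldsymbol 1_n^{\perp}$, and conclude strict positivity of the quadratic form. The only cosmetic difference is that you derive $\boldsymbol M\boldsymbol 1_n=\boldsymbol 0$ from $\boldsymbol J\boldsymbol 1_n=\boldsymbol 0$ rather than from the column-centered embedding matrix $\boldsymbol Q$, and you make the injectivity of $\boldsymbol L'$ explicit where the paper leaves it implicit.
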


We present the proofs of Theorem \ref{thm1} and Corollary \ref{cor1} in the Appendix. Corollary \ref{cor1} guarantees the applicability of Wishart likelihood to $\boldsymbol{Z} = \boldsymbol{LML}'$. In addition, Theorem \ref{thm1} shows that there will be no negative eigenvalues in $\boldsymbol M$, hence no imaginary coordinates present in PCoA.

Our proof of positive definiteness only uses the fact that $u(i,j)$ is the square root of sum of absolute values. Therefore, it is still applicable when the summation in (\ref{root-Unifrac}) is invoked over only a subset of branches. This is useful when we are concerned with a portion of community difference that comes from a particular subset of OTUs. For example, this subset can be chosen according to either a particular taxon such as Firmicutes, or from a certain internal node on the phylogenetic tree. Formally, let $\mathcal R$ be the subset of OTUs of interest and $k(\mathcal R)$ be the set of branches such that each branch included has all of its children OTU belonging to $\mathcal R$. The root-Unifrac distance contributed by OTUs from $\mathcal R$ is defined as
\begin{equation} \label{root-Unifrac-taxa}
u_{\mathcal R}(i,j) = \sqrt{\sum_{k \in k(\mathcal R)} b_k |p_{i,k} - p_{j,k}|}
\end{equation}
Using Corollary \ref{cor1} and substituting $\boldsymbol{Z = LML'}$ into the Wishart variance component model (\ref{Wishart_REML}), we can obtain the a heritability estimate $\hat h_{\mathcal R}$ for each taxa.

We use permutation to test $H_0: h = 0$ versus $H_a: h > 0$ and resampling method to obtain the confidence interval of heritability. P-value is calculated as ${|\{i: \hat h^{(i)} > \hat h\}|} / {n_{\text{perm}}}$, where $\hat h^{(i)}$ is the $i$th round heritability estimate after permuting the rows and columns of the GRM matrix $\boldsymbol A$ and $n_\text{perm}$ is the total rounds of permutation. For confidence interval, notice that observations across different families are assumed independent in familial studies. Therefore, we resample (bootstrap) families with replacement while keeping total number of families the same for a total of $n_\text{boot}$ rounds. Heritability estimate in each round is obtained by using all resampled observations followed by constructing $\boldsymbol A$ and $\boldsymbol C$ to preserve the same intra-family covariances and inter-family independences. Let $\tilde{\boldsymbol h}$ be the vector of $n_{\text{boot}}$ heritability estimates from the resampling procedure above. Then the confidence interval at level $1-\alpha$ is constructed as $\big(\hat h + z_{\alpha/2} \text{se}(\tilde{\boldsymbol h}), \hat h + z_{1-\alpha/2} 
\text{se}(\tilde{\boldsymbol h}) \big) $, where $\text{se}(\tilde{\boldsymbol h})$ is the sample standard error of $\tilde{\boldsymbol h}$ and $z_\alpha$ is the $\alpha$ quantile of normal distribution.

\section{Power simulation} \label{sec:simulation}
As we mentioned in the introduction, an obvious advantage to use beta diversity to summarize microbiome data is facilitating biological interpretation of heritability. An equally, if not more, important concern is whether such method with root-Unifrac has improved power to identify heritable groups of taxa. This section compares the simulated powers of detecting non-zero heritability between our model and other current methods. We first simulate OTU absolute abundances $\boldsymbol a_i = (a_{i1},...,a_{iR})$ according to log normal distribution: $\log \boldsymbol a_i \sim N(\mu_0, \Sigma_0)$ where $\Sigma_0 = \sigma_0^2 \big((1-\rho_0)\boldsymbol I_R+\rho_0 \boldsymbol 1_R \boldsymbol 1'_R\big)$ for $i=1,2,...,n$. Normalizing the absolute abundance gives the relative abundance $\boldsymbol \theta_i = \boldsymbol a_i / \sum_{r=1}^R a_{ir}$ for each $i$. In this section, we set $R=10$, $\mu_0 = 0$ and $\sigma^2_0 = 2$. Each round of simulation has $n=200$ samples including 50 MZ twin pairs and 50 DZ twin pairs. Similarities within each twin pair are generated by shrinking their relative abundances towards the geometric family mean. The shrinkage procedure is only invoked for the first 6 OTUs in order to produce localized signal. In other words, let $\gamma_0$ and $\gamma_1$, both within $[0,1]$, be the shrinkage strength for MZ and DZ twins, respectively. Larger shrinkage value corresponds to greater signal strength. For simplicity, we set $\gamma_1 = \gamma_0 /2$. The shrinkage procedure is invoked to each twin pair $(i,j)$ in the following way
$$\theta_{i'r} \leftarrow (\sqrt{\theta_{ir}\theta_{jr}})^{\gamma_z} \theta_{i'r}^{1-\gamma_z}  \text{ for $i'\in \{i,j\}$ and $1\leq r \leq 6$} $$
where $z=0$ if MZ twin pair and $z=1$ for DZ twin pair. After these two steps, each $\boldsymbol \theta_i$ is renormalized again to impose the unit sum constraint.

We use the phylogenetic tree of 10 OTUs belonging to the Rikenellaceae family in the TwinsUK dataset (to be introduced in the next section) to calculate root-Unifrac metrics. This tree is plotted in Figure \ref{fig:simulation_tree} with its internal nodes labeled as $T_1, T_2, ..., T_9$:
\begin{figure}[h] 
\centering
\includegraphics[height=7cm]{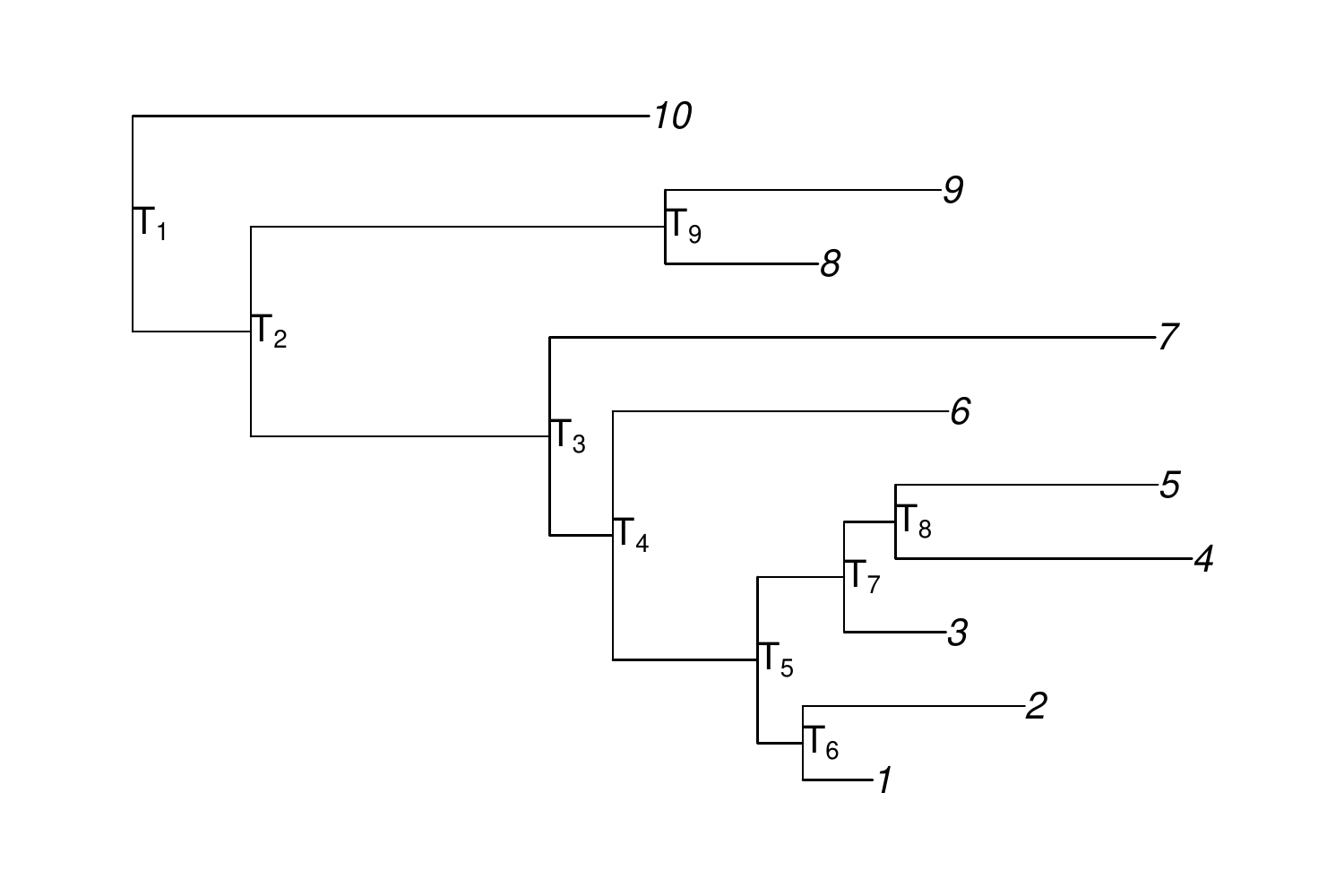}
\caption{Phylogenetic tree for power simulation.} \label{fig:simulation_tree}
\end{figure}

For each $T_e$, define $\mathcal R_e$ to be the set of OTUs under $T_e$. For example, $\mathcal R_5 = \{1,2,3,4,5\}$. Then we calculate heritability of each node $T_e$ from three different methods: 
\begin{enumerate}
\item Wishart: Use (\ref{root-Unifrac-taxa}) with $\mathcal R = \mathcal R_e$ to calculate $\boldsymbol Z$ and maximize (\ref{Wishart_REML}).
\item Univariate (logit): Let $\theta^{(e)}_i = \sum_{r \in \mathcal R_e} \theta_{ir}$. Maximize (\ref{normal_REML}) with $\boldsymbol y = (\log \frac{ \theta^{(e)}_1 }{ 1-\theta^{(e)}_1},..., \log \frac{\theta^{(e)}_n }{ 1-\theta^{(e)}_n})$.
\item Univariate (Box-Cox): Maximize (\ref{normal_REML}) with $\boldsymbol y$ being the optimal Box-Cox transformed response on $\theta^{(e)}_i$. This method is used in \cite{goodrich2014human} and \cite{goodrich2016genetic}.
\end{enumerate}

Notice that if $e=1$, then the (\ref{root-Unifrac-taxa}) is the same as (\ref{root-Unifrac}). The second and third method are the traditional univariate cases where, given $e$, it uses only the node relative abundance $\theta_i^{(e)}$ as input and completely ignores the relative contributions of relevant $\theta_{ir}$'s. This method cannot be applied to $e=1$ since $\theta^{(1)}_i = 1$ for all $i$. After calculating these heritability estimates, we permute the rows and columns of $\boldsymbol A$ for 100 times to obtain the p-value of testing the null hypothesis of zero heritability for each method. Finally, Type-I error or power are obtained by calculating the proportion of p-values below 0.05 within 200 rounds of simulation.

We compare the type I error and power among the aforementioned three methods as $\gamma_0$ varies while fixing $\rho_0 = 0$. The results are presented in Table \ref{table:powersim_gamma}. Type I error of all three methods at $\gamma_0 = 0$ are slightly less than, but not far from, the nominal level 0.05. For $T_9$, its power is also close to 0.05 at all values of $\gamma_0$ since this node does not contain any shrunk OTUs. When $\gamma_0 > 0$, we observe that Wishart method consistently has the highest power, compared to univariate (logit) and univariate (Box-Cox), on all nodes with at least one shrunk OTU, i.e. $\{T_1,.,,,T_8\}$. For a fixed value of $\gamma_0 > 0$, Wishart method has increased power for nodes that contain higher number of shrunk OTUs (first number in the parenthesis under the node column). Its test powers peaks at $T_4$, which includes all shrunk OTUs, i.e. $\{1,...,6\}$, and none of the unshrunk OTUs, i.e $\{7,...,10\}$. Surprisingly, both univariate (logit) and univariate (Box-Cox) method have decreasing power as number of shrunk OTUs increase under the node. Depending on the value of $\gamma_0$, their powers, as functions of the node, peak at either $T_6$ and $T_8$, both of which contain smallest non-zero number of shrunk OTUs. On other nodes such as $T_7$, $T_5$ and $T_4$, its power is much smaller than Wishart. This shows that methods using only the univariate trait $\{\theta_i^{(e)}\}_i$ are inadequate to detect community heritability even on a small group of OTUs. Table \ref{table:powersim_rho} presents a similar power comparison but fixing $\gamma_0 = 0.2$ and letting $\rho_0$ vary. Test powers of each method exhibit some fluctuations as $\rho_0$ assumes different values, but the overall conclusion remains the same. 

\renewcommand{\arraystretch}{1.2}
\setlength{\tabcolsep}{4pt}
\begin{table}[h]
\footnotesize
\caption{\label{table:powersim_gamma} Simulated type-I error and power from testing zero heritability for different values of $\gamma_0$ while $\rho_0 = 0$. Larger value of $\gamma_0$ indicate greater strength of signal. The number in the parenthesis after each $T_e$ denotes how many OTUs under $T_e$ are shrunk towards the family mean or not. For example, $T_1$ has 6 shrunk OTUs and 4 unshrunk OTUs. W denotes Wishart method, U(l) denotes univariate method with logit transform, and U(B) denotes univariate method with Box-Cox transform.}
\vspace{2mm}
\centering
\begin{tabular}{c c c c c c c c c c c c c c c c c}
\hline 
\rule{0pt}{2ex} & &  \multicolumn{3}{c}{$\gamma_0 = 0$ (null)} & & \multicolumn{3}{c}{$\gamma_0 = 0.1$} & & \multicolumn{3}{c}{$\gamma_0 = 0.2$} & & \multicolumn{3}{c}{$\gamma_0 = 0.3$}\\ \cline{3-5} \cline{7-9} \cline{11-13} \cline{15-17}   
\rule{0pt}{2ex} \textbf{Node} & & W & U(l) & U(B) & & W & U(l) & U(B) & & W & U(l) & U(B) & & W & U(l) & U(B) \\
\hline
$T_1$ (6,4) & & 0.02 & --- & --- & & 0.09 & --- & --- & & 0.185 & --- & --- & & 0.34 & --- & ---\\ 
$T_2$ (6,3) & & 0.035 & 0.05 & 0.05 & & 0.085 & 0.045 & 0.035 & & 0.21 & 0.035 & 0.04 & & 0.395 & 0.03 & 0.045 \\
$T_3$ (6,1) & & 0.04 & 0.015 & 0.03 & & 0.13 & 0.035 & 0.045 & & 0.295 & 0.025 & 0.035 & & 0.55 & 0.045 & 0.06 \\
$T_4$ (6,0) & & 0.04 & 0.03 & 0.04 & & 0.185 & 0.035 & 0.04 & & 0.555 & 0.065 & 0.06 & & 0.8 & 0.13 & 0.115  \\
$T_5$ (5,0) & & 0.02 & 0.035 & 0.05 & & 0.14 & 0.045 & 0.06 & & 0.475 & 0.09 & 0.1 & & 0.765 & 0.185 & 0.16 \\
$T_6$ (2,0) & & 0.04 & 0.025 & 0.04 & & 0.125 & 0.085 & 0.08 & & 0.335 & 0.23 & 0.235 & & 0.585 & 0.445 & 0.41 \\
$T_7$ (3,0) & & 0.015 & 0.025 & 0.025 & & 0.12 & 0.065 & 0.065 & & 0.44 & 0.19 & 0.185 & & 0.695 & 0.38 & 0.38  \\
$T_8$ (2,0) & & 0.03 & 0.03 & 0.015 & & 0.1 & 0.04 & 0.05 & & 0.37 & 0.22 & 0.205 & & 0.68 & 0.49 & 0.46 \\
$T_9$ (0,2) & & 0.03 & 0.025 & 0.025 & & 0.03 & 0.03 & 0.025 & & 0.04 & 0.04 & 0.035 & & 0.04 & 0.035 & 0.04\\
\hline
\end{tabular}
\end{table}
\renewcommand{\arraystretch}{1}
\setlength{\tabcolsep}{6pt}

\renewcommand{\arraystretch}{1.2}
\begin{table}[h]
\footnotesize
\caption{\label{table:powersim_rho} Simulated type-I error and power from testing zero heritability for different values of $\rho_0$ while $\gamma_0 = 0.2$. The number in the parenthesis after each $T_e$ denotes how many OTUs under $T_e$ are shrunk towards the family mean or not. For example, $T_1$ has 6 shrunk OTUs and 4 unshrunk OTUs. W denotes Wishart method, U(l) denotes univariate method with logit transform, and U(B) denotes univariate method with Box-Cox transform.}
\vspace{2mm}
\centering
\begin{tabular}{c c c c c c c c c c c c c}
\hline 
\rule{0pt}{2ex} & &  \multicolumn{3}{c}{$\rho_0 = 0$} & & \multicolumn{3}{c}{$\rho_0 = 0.3$} & & \multicolumn{3}{c}{$\rho_0 = 0.6$} \\ \cline{3-5} \cline{7-9} \cline{11-13}
\rule{0pt}{2ex} \textbf{Node} & & W & U(l) & U(B) & & W & U(l) & U(B)  & & W & U(l) & U(B)\\
\hline
$T_1$ (6,4) & & 0.185 & --- & --- & & 0.15 & --- & --- & & 0.21 & --- & ---  \\ 
$T_2$ (6,3) & & 0.21 & 0.035 & 0.04 & & 0.2 & 0.01 & 0.01 & & 0.275 & 0.03 & 0.045 \\
$T_3$ (6,1) & & 0.295 & 0.025 & 0.035 & & 0.3 & 0.025 & 0.025 & & 0.37 & 0.06 & 0.06 \\
$T_4$ (6,0) & & 0.555 & 0.065 & 0.06 & & 0.495 & 0.085 & 0.105 & & 0.61 & 0.095 & 0.115\\
$T_5$ (5,0) & & 0.475 & 0.09 & 0.1 & & 0.43 & 0.16 & 0.14 & & 0.545 & 0.1 & 0.1 \\
$T_6$ (2,0) & & 0.335 & 0.23 & 0.235 & & 0.275 & 0.21 & 0.215 & & 0.34 & 0.235 & 0.245 \\
$T_7$ (3,0) & & 0.44 & 0.19 & 0.185 & & 0.39 & 0.185 & 0.175 & & 0.49 & 0.195 & 0.21\\
$T_8$ (2,0) & & 0.37 & 0.22 & 0.205 & & 0.3 & 0.205 & 0.195 & & 0.455 & 0.245 & 0.255\\
$T_9$ (0,2) & & 0.04 & 0.04 & 0.035 & & 0.02 & 0.04 & 0.045 & & 0.03 & 0.025 & 0.02\\
\hline
\end{tabular}
\end{table}
\renewcommand{\arraystretch}{1}

Next, we apply common dimension reduction techniques to the root-Unifrac dissimilarity matrix (\ref{root-Unifrac-taxa}). These methods include principal coordinate analysis (PCoA), metric multidimensional scaling (mMDS), and non-metric multidimensional scaling (nMDS). PCoA finds the eigenvectors of the outer product matrix $\boldsymbol M$, whereas both mMDS and nMDS aim to minimize their particular stress functions \citep{borg2005modern} to approximate all pairwise dissimilarities. The following univariate traits are extracted from these dimension reduction methods: the top three principal coordinates (eigenvectors) from PCoA, the best one-dimensional representation from mMDS and the best one-dimensional representation from nMDS. Each univariate trait is fed into (\ref{normal_REML}) to obtain a heritability estimate and a permutation p-value. The results, as shown in Table \ref{table:powersim_MDS} from 100 rounds of simulation, demonstrate that all of the dimension reduction techniques have much reduced power compared to the Wishart method on $T_1,...,T_8$, nodes with at least one shrunk OTUs. We also do not see a consistent ranking of powers among the top three principal coordinates. Furthermore, nMDS has uncalibrated Type-I error on $T_9$.

\renewcommand{\arraystretch}{1.2}
\begin{table}[h]
\footnotesize
\caption{\label{table:powersim_MDS} Simulated type-I error and power for testing zero heritability from different dimension reduction techniques at $\rho_0 = 0$ and $\gamma_0 = 0.3$. The number in the parenthesis after each $T_e$ denotes how many OTUs under $T_e$ are shrunk towards the family mean or not. For example, $T_1$ has 6 shrunk OTUs and 4 unshrunk OTUs. PCoA, mMDS and mMDS are all applied to the root-Unifrac dissimilarity (\ref{root-Unifrac-taxa}) and separately used as univariate response into (\ref{normal_REML}). PC1-PC3 indicate the first, second or third principal coordinate (eigenvector) from PCoA, respectively. W denotes the Wishart method. }
\vspace{2mm}
\centering
\begin{tabular}{c c c c c c c c c c c c c c c c c}
\hline 
\rule{0pt}{3ex} \textbf{Node} & & W & PC1 & PC2 & PC3 & mMDS & nMDS \\
\hline
$T_1$ (6,4) & & 0.26 & 0.05 & 0.04 & 0.06 & 0.05 & 0.18\\ 
$T_2$ (6,3) & & 0.39 & 0.04 & 0.07 & 0.32 & 0.06 & 0.15\\
$T_3$ (6,1) & & 0.62 & 0.12 & 0.04 & 0.5 & 0.09 & 0.16 \\
$T_4$ (6,0) & & 0.88 & 0.27 & 0.46 & 0.43 & 0.11 & 0.18\\
$T_5$ (5,0) & & 0.82 & 0.28 & 0.55 & 0.45 & 0.13 & 0.17\\
$T_6$ (2,0) & & 0.59 & 0.48 & 0.34 & 0.08 & 0.15 & 0.22\\
$T_7$ (3,0) & & 0.76 & 0.41 & 0.48 & 0.36 & 0.16 & 0.14\\
$T_8$ (2,0) & & 0.64 & 0.51 & 0.4 & 0.07 & 0.12 & 0.22\\
$T_9$ (0,2) & & 0.05 & 0.05 & 0.03 & 0.03 & 0.02 & 0.22\\
\hline
\end{tabular}
\end{table}
\renewcommand{\arraystretch}{1}

\section{Empirical results from TwinsUK} \label{sec:TwinsUK}
\subsection{Heritability estimates} \label{sec:TwinsUK_sub1}
\cite{goodrich2014human} examined the influence of host genetics on fecal microbiome from a large twin-based study (TwinsUK). The TwinsUK population has more than 1000 16S rRNA microbial samples including 416 twin pairs. These sequences are processed by QIIME v1.9.1 \citep{caporaso2010qiime} to produce the OTUs at 97\% similarity level and the phylogenetic tree.  Samples with sequencing depth less than 10000 are discarded. We do not apply any rarefaction or subsampling before calculating the taxon abundances (this issue is further inspected in Section \ref{sec:TwinsUK_sub3}). Since the overwhelming majority of observations are from females (1061 females vs 20 males), we remove all male observations to avoid excessive variability on the sex effect. In the case of longitudinal observations for the same individual, only the first observation is used. This leaves 186 dizygotic (DZ) and 126 monozygotic (MZ) twin pairs. Similar to \cite{goodrich2014human}, OTUs that appear in less than 50\% of the microbial samples are excluded. The total number of remaining OTUs is 705. We also introduce a pseudo count in each OTU in all samples.


We apply the aforementioned ACE model with Wishart distribution on these microbial samples using the following covariates: age, body mass index, identity of technician (two), sequencing run (16 instrument runs) and shipment batch (8 shipments). These technical covariates are chosen according to \cite{goodrich2014human}. The root-Unifrac matrix is calculated by (\ref{root-Unifrac-taxa}) only for those taxa with at least 4 descendant OTUs. To eliminate the burden of multiple hypothesis testing, if a higher level taxon (e.g. phylum Firmicutes) has more than 95\% of its sequences belonging to one of its lower level taxon (e.g. order Clostridia), then the higher order taxon is excluded. This leaves a total of 26 taxa, each with its own root-Unifrac dissimilarity matrix and heritability estimate.

As described in Section \ref{sec:community heritability}, we permute the rows and columns of $\boldsymbol A$ for $10^{4}$ times to test the null hypothesis of non-zero heritability for each of these 26 taxa. A total of 6 taxa have p-values smaller than the Bonferroni threshold at 0.05 global Type-I error. Notice that this is a conservative correction due to the correlations among taxa abundances. We further calculate the 95\% bootstrap confidence interval for these significant taxa. These results are reported in Table \ref{table:Wishart}. The Bifidobacterium genus is also reported with significant heritability in \cite{goodrich2014human}, although these authors use the univariate (Box-Cox) method (c.f. Section \ref{sec:simulation}). Other significant taxa in \cite{goodrich2014human} that are related to our findings include Ruminococcaceae genus and Clostridiaceae family.

\renewcommand{\arraystretch}{1.2}
\begin{table}[h]
\caption{\label{table:Wishart} Number of OTUs, heritability estimates, p-values, 95\% bootstrap confidence intervals (CI) for taxa that are globally significant at 0.05 level under Bonferroni correction. Taxon names are provided at their finest (lowest) possible rank: kingdom (k), phylum (p), class (c), order (o), family (f) and genus (g). P-value and CI are computed using permutation and resampling, respectively.}
\vspace{2mm}
\footnotesize
\centering
\begin{tabular}{ c c c c c c c}
\hline
\textbf{\Gape[0.2cm][0.2cm]{Taxa}} & \# of OTUs & $\hat h$ & \textbf{P-value} & \textbf{CI} \\
\hline
Actinobacteria (p) & 10 & 0.223 & $< 10^{-4}$ & (0.091, 0.355)  \\
Clostridiales (o) & 590 & 0.110 & $<10^{-4}$ & (0.071, 0.149)\\
Christensenellaceae (f) & 7 & 0.185 & $10^{-4}$ & (0.075, 0.294) \\
Rikenellaceae (f) & 10 & 0.149 & $2\times 10^{-4}$ & (0.044, 0.254) \\
Ruminococcaceae (f) & 227 & 0.093  & $< 10^{-4}$ & (0.049, 0.137)\\
Bifidobacterium (g) & 5 & 0.231 & $<10^{-4}$ & (0.096, 0.366) \\
\hline
\end{tabular}
\end{table}
\renewcommand{\arraystretch}{1}

\subsection{Effect of sequencing noise} \label{sec:TwinsUK_sub3}
Calculating the Unifrac or root-Unifrac requires relative abundances as input. For each sample, these relative abundances are obtained by normalizing the $i$th taxa sequences $\boldsymbol x_i$ over their sum $N_i = \sum_{o=1}^{n_o} x_{io}$, the latter conventionally called library size or sequencing depth. This normalization step introduces an extra layer of data uncertainty that is not modeled by any of the variance components in the Wishart ACE model. As a result, estimates of $\sigma^2_A, \sigma^2_C$ and $\sigma^2_E$, hence heritability, can be biased. Larger sequencing depth will mostly likely lead to small sequencing variability and thus reduce the bias in the ACE variance component estimates. Although it is hard to deduce the closed form of this bias, the fact that such noises caused by normalization are independent across samples can more likely lead to an inflated $\hat{\sigma}^2_E$ and thus a downwards biased $\hat h$. 

Here we inspect the bias of heritability estimates caused by sequencing noise through simulation using the same TwinsUK dataset. For each sample $i$, we first calculate the observed relative abundance $\boldsymbol\theta_i = \boldsymbol x_i / N_i$. These observed relative abundances are treated as the ground truth relative abundances for simulation purpose. After this step, we obtain $\tilde{\boldsymbol x}_i$ by drawing from a multinomial distribution with probability $\boldsymbol \theta_i$ and total size (sequencing depth) $\xi$, where $\xi$ is 2500, 5000, 7500 or 10000. The $i$th simulated relative abundance is therefore $\tilde{\boldsymbol \theta}_i = \tilde{\boldsymbol x_i} / \xi$. In each simulation round, we calculate Wishart heritability estimates using $\{\tilde{\boldsymbol \theta}_i\}_i$ for the root-Unifrac metric on the six significant taxa reported in Table \ref{table:Wishart}. The ground truth heritability, on the other hand, is obtained by using $ \{\boldsymbol \theta_i\}_i$ to calculate the root-Unifrac metric. For each value of $\xi$, a total of 100 simulation rounds are conducted. We demonstrate the boxplot of these simulated heritability estimates and compare them against the ground truth heritability (dashed line) in Figure \ref{fig:seqnoise_simulation}. The negative bias of simulated heritability estimates is present in all cases, and they decrease to zero at increasing levels of $\xi$. At $\xi = 10000$, the simulated estimates are all very close to the ground truth, with error less than 0.01. Since the mean and standard deviation of actual sequencing depth in TwinsUK dataset is 56911 and 18461, respectively, we conclude that the negative biases on the heritability estimates reported in Table \ref{table:Wishart} are negligible.

\begin{figure}[t]
\centering
\subfloat{\includegraphics[height=4cm]{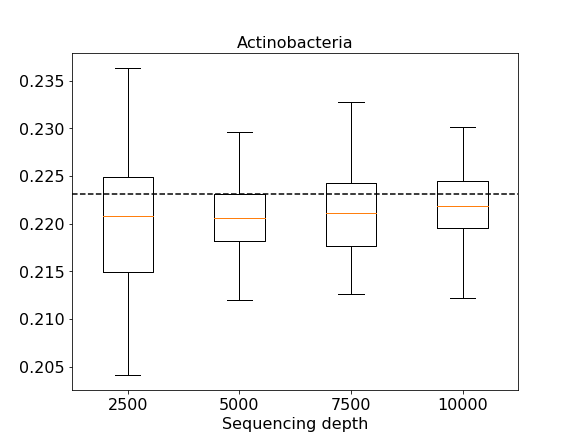}} 
\subfloat{\includegraphics[height=4cm]{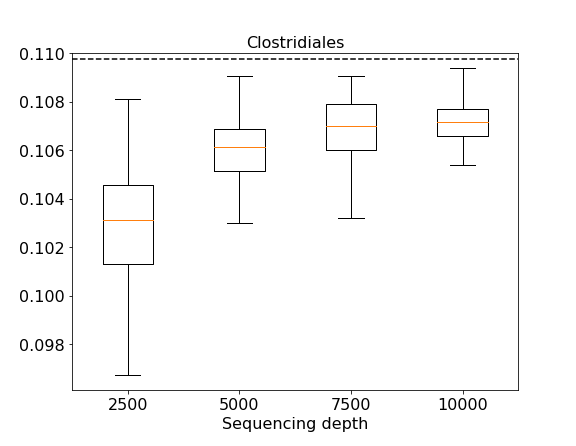}}
\subfloat{\includegraphics[height=4cm]{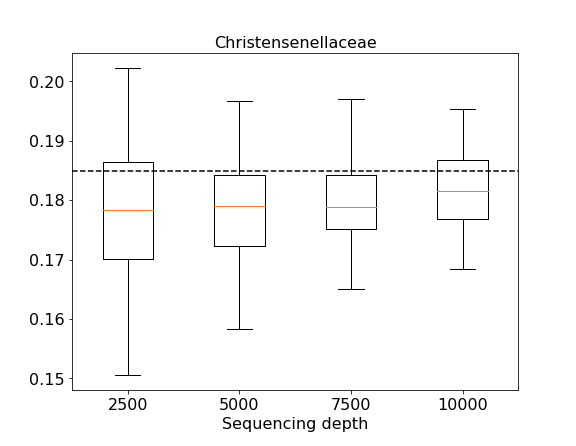}}  \\
\subfloat{\includegraphics[height=4cm]{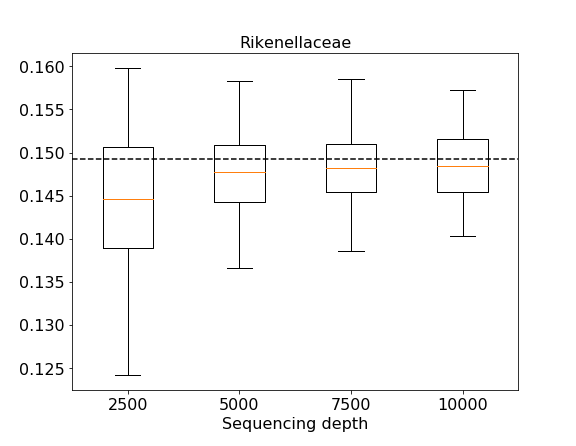}} 
\subfloat{\includegraphics[height=4cm]{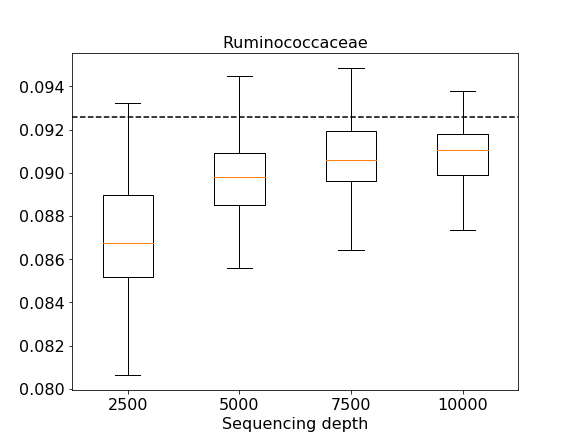}} 
\subfloat{\includegraphics[height=4cm]{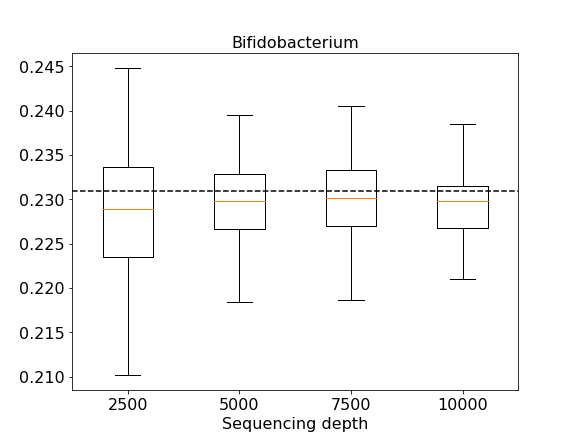}} \\
\caption{\label{fig:seqnoise_simulation} Boxplot of simulated heritability estimates from Wishart method. In each round, the $i$th sequencing data are produced by subsampling $\boldsymbol x_i$ to a certain sequencing depth $\xi$. A total of 100 simulation rounds are conducted for each value of $\xi \in \{2500, 5000, 7500, 10000\}$. Dashed lines in each plot correspond to the ground truth heritability calculated using $\{\boldsymbol\theta_i\}_i$ as input data.}
\end{figure}

\section{Discussion} \label{sec:discussion}

In this paper, we propose the Wishart variance component model to estimate microbiome community heritability when the microbiome data are summarized by their root-Unifrac dissimilarities. We prove that the root-Unifrac matrix always has an isometric Euclidean embedding and therefore is adequate for REML estimation with the Wishart distribution. Our work allows researchers to bypass the dimension reduction step and directly analyze all the variations present in the dissimilarity matrix.


In Section \ref{sec:TwinsUK_sub3} we inspected the negative biases of heritability estimates caused by sequencing noise. Although we concluded that such biases are negligible at the large sequencing depth of TwinsUK data, a better approach is to directly model the sequencing noise component as follows:
\begin{equation} \label{ACES}
\boldsymbol\Sigma =  \sigma^2_A \boldsymbol A + \sigma^2_C \boldsymbol C + \sigma^2_E \boldsymbol E + \boldsymbol S
\end{equation}
where $\boldsymbol S = \text{diag}(\sigma^2_{S_1}, \sigma^2_{S_2}, ..., \sigma^2_{S_n})$ captures sequencing noise for each sample, and heritability is still defined as $h = \sigma^2_A / (\sigma^2_A+\sigma^2_C+\sigma^2_E)$. This model makes it explicit that heritability is not dependent on sequencing noise.

Unfortunately, using (\ref{ACES}) leads to identifiability issues among $\sigma^2_{S_i}$'s and $\sigma^2_E$. One possible way to avert this obstacle is to separately estimate $\sigma^2_{S_i}$ by exploring the variability of sequences within $\boldsymbol x_i$. Suppose the $\boldsymbol \pi_i$ is the true relative abundance for $i$th individual. If we can find a reasonable distribution to model $\boldsymbol x_i | \boldsymbol \pi_i$, then we can generate independent and identically distributed samples, $\{\boldsymbol x^{[1]}_i, \boldsymbol x^{[2]}_i, ..., \boldsymbol x^{[B]}_i\}$, from such distribution by using $\hat{\boldsymbol \pi}_i = \boldsymbol x_i / N_i$ in order to mimic the process of repeatedly sequencing the $i$th sample for $B$ times. The sequencing depth of these bootstrap samples are kept at the same level at the original sample, i.e.  $\boldsymbol x^{[b]}_i = (x^{[b]}_{i1}, ..., x^{[b]}_{in_o})$ and $\sum_{u=1}^{n_o} x^{[b]}_{io} = \sum_{u=1}^{n_o} x_{io}$ for all $b$. Since $\{\boldsymbol x^{[b]}_i \}_b$ share the same effect from covariates, genetics, common environment and unique environment, we can use a single intercept to model their total effect. This leaves the independent and identical sequencing noise the only remaining component that explains their variability:
\begin{equation} \label{Wishart_i}
\boldsymbol L_1 \boldsymbol M_i \boldsymbol L_1' \sim W(\sigma^2_{S_i}\boldsymbol L_1 \boldsymbol L_1', q_i)
\end{equation}  
where $\boldsymbol M_i$ is the $B \times B$ Gower's centered matrix from calculating root-Unifrac on $\boldsymbol x^{[1]}_i, ..., \boldsymbol x^{[B]}_i$, and $\boldsymbol L_1$ is the $(B-1) \times B$ matrix that removes only the intercept effect. The estimated $\hat \sigma^2_{S_i}$ from maximizing the Wishart log likelihood of (\ref{Wishart_i}) can be used for (\ref{ACES}), hence avoiding the identifiability issue.

\bibliography{ref}

\appendix 
\section*{Appendix: Theorem proofs}
We first prove the the following lemma:
\begin{lemma} \label{lemma1}
For $a_1 > a_2 > ... > a_n > 0$, 
$$\det \begin{pmatrix}
a_1 & a_2 & a_3 & ... & a_n\\
a_2 & a_2 & a_3 & ... & a_n\\
a_3 & a_3 & a_3 & ... & a_n \\
... & ... \\
a_n & a_n & a_n & ... & a_n \\
\end{pmatrix} > 0$$
\end{lemma}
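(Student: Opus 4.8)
The plan is to compute the determinant explicitly by reducing the matrix to triangular form through determinant-preserving row operations. Write $\boldsymbol N$ for the displayed matrix, and note its entries are $\boldsymbol N_{ij} = a_{\max(i,j)}$: indeed, for $j \le i$ every entry in row $i$ up to the diagonal equals $a_i$, while for $j > i$ the entry is $a_j$. So row $i$ reads $(a_i, \ldots, a_i, a_{i+1}, a_{i+2}, \ldots, a_n)$, with the constant block $a_i$ occupying the first $i$ positions.

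The key step is to subtract each row from the one above it. Concretely, for $i = 1, 2, \ldots, n-1$ I would replace row $i$ by (row $i$) $-$ (row $i+1$), leaving the last row untouched; since each operation adds a multiple of one row to another, the determinant is unchanged. A direct entrywise check shows what happens: in column $j \le i$ both rows have the same-style entries $a_i$ and $a_{i+1}$, giving difference $a_i - a_{i+1}$; in column $j = i+1$ both entries equal $a_{i+1}$, giving $0$; and in every column $j > i+1$ both entries equal $a_j$, again giving $0$. Hence the new row $i$ is $(a_i - a_{i+1}, \ldots, a_i - a_{i+1}, 0, \ldots, 0)$, with the constant $a_i - a_{i+1}$ filling exactly the first $i$ positions and zeros thereafter.

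After these operations the matrix is lower triangular: row $i$ (for $i < n$) vanishes strictly above position $i$, and the untouched last row is $(a_n, \ldots, a_n)$. Reading off the diagonal gives entries $a_1 - a_2,\ a_2 - a_3,\ \ldots,\ a_{n-1} - a_n$ in the first $n-1$ positions and $a_n$ in the last. Therefore
\[
\det \boldsymbol N \;=\; a_n \prod_{i=1}^{n-1} (a_i - a_{i+1}).
\]
Since the hypothesis $a_1 > a_2 > \cdots > a_n > 0$ makes each factor $a_i - a_{i+1}$ strictly positive and $a_n > 0$, the product is strictly positive, which is the claim.

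There is no serious obstacle here; the only point requiring care is the bookkeeping in the second step, namely verifying that subtracting consecutive rows produces precisely the claimed block-of-constants-then-zeros pattern and hence exact lower-triangularity. I would state that entrywise verification cleanly (splitting into the three cases $j \le i$, $j = i+1$, $j > i+1$) rather than appealing to an induction or Schur-complement argument, both of which would also work but are less transparent than the direct triangularization.
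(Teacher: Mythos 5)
Your proof is correct, but it takes a different route from the paper's. The paper argues by induction on the matrix size: writing the $N\times N$ matrix in block form with its upper-left $(N-1)\times(N-1)$ corner $\boldsymbol C_{N-1}$ and last row/column constant equal to $a_N$, it applies the block determinant formula to get $\det(\boldsymbol C_N) = a_N \det(\boldsymbol C_{N-1} - a_N \boldsymbol 1_{N-1}\boldsymbol 1_{N-1}')$, then observes that the matrix inside has the same structure with $a_i$ replaced by $a_i - a_N$, so the induction hypothesis applies. You instead triangularize directly: subtracting each row from the one above (equivalently, left-multiplying by a unit bidiagonal matrix of determinant one) produces a lower triangular matrix, and your entrywise case check ($j\le i$, $j=i+1$, $j>i+1$) correctly establishes the claimed pattern, giving the closed form $\det \boldsymbol N = a_n\prod_{i=1}^{n-1}(a_i-a_{i+1})$. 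Your argument buys an explicit formula for the determinant rather than just its sign, and avoids induction entirely; the paper's Schur-complement recursion is essentially the same computation carried out one row at a time, and its self-similar structure ($a_i \mapsto a_i - a_N$) is what makes the induction clean. Both are valid and elementary; yours is arguably the more transparent of the two.
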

\begin{proof}
We prove by induction. Let $\boldsymbol C_i$ be the upper-left $i \times i$ corner of the matrix above. Evidently, $\det(\boldsymbol C_1) >0$ and $\det(\boldsymbol C_2) > 0$. 

Now assume that $\det(\boldsymbol C_{N-1})>0$ for some $N \geq 2$, we can write $\boldsymbol C_N$ as 
$$\boldsymbol C_N = \begin{pmatrix}
\boldsymbol C_{N-1} & a_N \boldsymbol 1_{N-1} \\
a_N \boldsymbol 1_{N-1}' & a_N \\
\end{pmatrix}
$$
Using the block formula for determinants, we have $\det(\boldsymbol C_N) = \det(\boldsymbol C_{N-1} - a_N \boldsymbol 1_{N-1} \boldsymbol 1_{N-1}') a_N$. Notice that $\boldsymbol C_{N-1} - a_N \boldsymbol 1_{N-1} \boldsymbol 1_{N-1}'$ also assumes the form of $\boldsymbol C_{N-1}$ except that $a_i$ is substituted by $a_i - a_N$ for $1\leq i \leq N-1$. Since $a_1-a_N > a_2-a_N > ... > a_{N-1}-a_N$, we know from the induction assumption that $\det(C_{N-1} - a_N \mathbf 1_{N-1} \mathbf 1_{N-1}')>0$, and therefore $\det(C_N) > 0$.
\end{proof}

We will need the following results from \cite{morgan1974embedding}:
\begin{definition} \label{def_Morgan}
\textnormal{\citep{morgan1974embedding}} Consider an ordered tuple $(\kappa_0, \kappa_1, ..., \kappa_{N})$ whose elements are from a metric space $(\Omega,d)$. Define an $N \times N$ matrix $\boldsymbol V$ such that $\boldsymbol V_{i,j} = \big(d^2(\kappa_i, \kappa_0)+d^2(\kappa_j, \kappa_0)-d^2(\kappa_i, \kappa_j)\big)/2$. $(\Omega, d)$ is called flat if $|\boldsymbol V|\geq 0$ for any ordered tuple. Furthermore, the dimension of $(\Omega, d)$, provided that it is flat, is the largest number $N$ such  that there exists a tuple of size $N+1$ with $|\boldsymbol V|>0$.
\end{definition}
\begin{thm} \label{thm_Morgan}
\textnormal{\citep{morgan1974embedding}} A metric space can be embedded into an $n$ dimensional Euclidean space if and only if the metric space is flat and of dimension less than or equal to $n$.
\end{thm}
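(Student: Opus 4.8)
The plan is to prove both directions directly, reading the matrix $\boldsymbol V$ of Definition \ref{def_Morgan} as a would-be Gram matrix: if the points $\kappa_0,\kappa_1,\ldots,\kappa_N$ sat in Euclidean space with $\kappa_0$ at the origin, then by the polarization identity $\boldsymbol V_{i,j}$ is exactly the inner product $\langle \kappa_i-\kappa_0,\kappa_j-\kappa_0\rangle$. This observation drives everything. For necessity (embeddable $\Rightarrow$ flat and dimension $\leq n$), I would suppose an isometry $\iota\colon(\Omega,d)\to\mathbb R^n$ exists. For any tuple, $\boldsymbol V$ is literally the Gram matrix of the vectors $\iota(\kappa_i)-\iota(\kappa_0)$, hence positive semidefinite, so $|\boldsymbol V|\geq 0$ and the space is flat. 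Since these vectors lie in $\mathbb R^n$, the Gram matrix has rank at most $n$, and an $N\times N$ Gram matrix can have positive determinant only when $N\leq n$; thus no tuple of size exceeding $n+1$ yields $|\boldsymbol V|>0$, so the dimension is at most $n$. This direction is routine.

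For sufficiency I would construct the embedding explicitly. Let $r\leq n$ be the dimension and fix a maximal tuple $(\kappa_0,\ldots,\kappa_r)$ whose $r\times r$ matrix $\boldsymbol V$ satisfies $|\boldsymbol V|>0$. First I would upgrade this to positive definiteness: every principal submatrix of $\boldsymbol V$ is itself the matrix of a sub-tuple based at $\kappa_0$, so flatness forces every principal minor to be nonnegative, and together with $|\boldsymbol V|>0$ this makes $\boldsymbol V$ positive definite. Writing $\boldsymbol V=\boldsymbol B\boldsymbol B'$, I place $\iota(\kappa_0)$ at the origin and let the rows of $\boldsymbol B$ be $\iota(\kappa_i)-\iota(\kappa_0)$, reproducing all distances among the basis points. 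For an arbitrary point $\omega$, I define $\boldsymbol w$ by $\boldsymbol w_i=\tfrac{1}{2}\big(d^2(\kappa_i,\kappa_0)+d^2(\omega,\kappa_0)-d^2(\kappa_i,\omega)\big)$ and set the coordinate vector of $\iota(\omega)-\iota(\kappa_0)$ to be $\boldsymbol V^{-1}\boldsymbol w$ in the basis $\{\iota(\kappa_i)-\iota(\kappa_0)\}$, giving a map into $\mathbb R^r\subseteq\mathbb R^n$.

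The hard part will be verifying that this single map reproduces all pairwise distances, not merely distances to the $r+1$ basis points. The engine is the block (Schur-complement) determinant formula combined with the dimension constraint. Applying it to the tuple $(\kappa_0,\ldots,\kappa_r,\omega)$, whose $(r+1)\times(r+1)$ matrix must have determinant zero since its size exceeds the dimension (flatness forces $|\boldsymbol V|=0$ rather than $>0$), yields $\boldsymbol w'\boldsymbol V^{-1}\boldsymbol w=d^2(\omega,\kappa_0)$, i.e. $\|\iota(\omega)-\iota(\kappa_0)\|^2=d^2(\omega,\kappa_0)$. Then for two arbitrary points $\omega,\omega'$ I would apply the same formula to $(\kappa_0,\ldots,\kappa_r,\omega,\omega')$; its $(r+2)\times(r+2)$ matrix again has vanishing determinant, and expanding via the Schur complement of $\boldsymbol V$, whose two new diagonal entries already vanish by the previous step, collapses the determinant to $-|\boldsymbol V|\,(s-\langle\iota(\omega)-\iota(\kappa_0),\iota(\omega')-\iota(\kappa_0)\rangle)^2$, where $s=\tfrac{1}{2}\big(d^2(\omega,\kappa_0)+d^2(\omega',\kappa_0)-d^2(\omega,\omega')\big)$.

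Since $|\boldsymbol V|>0$, vanishing of this determinant forces the inner product to equal $s$, which by the polarization identity gives exactly $\|\iota(\omega)-\iota(\omega')\|=d(\omega,\omega')$, completing the verification. I expect the bookkeeping of these block determinants, and the careful use of the dimension bound to guarantee that the relevant larger determinants vanish, to be the only delicate points; everything else is elementary linear algebra.
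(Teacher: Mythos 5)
The paper does not actually prove this statement: Theorem \ref{thm_Morgan} is quoted from \cite{morgan1974embedding} and used as a black box in the proof of Theorem \ref{thm1}, so there is no internal proof to compare yours against. Your blind reconstruction is, as far as I can verify, correct and complete, and it follows the classical Menger--Schoenberg line of argument that Morgan's paper formalizes: read $\boldsymbol V$ as the Gram matrix of the difference vectors based at $\kappa_0$. Necessity is then just positive semidefiniteness and the rank bound for Gram matrices of vectors in $\mathbb R^n$. For sufficiency, your three key moves all check out: (i) upgrading $|\boldsymbol V|>0$ to positive definiteness correctly uses flatness of \emph{all} sub-tuples based at $\kappa_0$, which gives nonnegativity of all principal minors (not merely the leading ones) --- exactly the hypothesis needed for the standard PSD criterion; (ii) the bordered $(r+1)\times(r+1)$ determinant must vanish because flatness gives $\geq 0$ while maximality of the dimension forbids $>0$, and the Schur identity $\det\tilde{\boldsymbol V}=|\boldsymbol V|\bigl(d^2(\omega,\kappa_0)-\boldsymbol w^{\top}\boldsymbol V^{-1}\boldsymbol w\bigr)$ then pins down the norm of each new point; (iii) the $(r+2)\times(r+2)$ case collapses, once the two diagonal Schur entries vanish by step (ii), to $-|\boldsymbol V|$ times the square of the inner-product defect, forcing the defect to zero; distance preservation (hence injectivity) follows by polarization. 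What your route buys is a self-contained proof where the paper offers only a citation. Two minor cautions: treat the degenerate case $r=0$ (a one-point space) separately, since $\boldsymbol V$ is then empty and $\boldsymbol V^{-1}$ undefined; and rename one of your two bordering vectors, since you currently use the prime both for the second point $\omega'$ and for matrix transposition, which makes the final Schur computation needlessly easy to misread.
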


\subsection*{Proof of Theorem \ref{thm1}}
\begin{proof}
We first prove that the metric space $(\Omega, u)$ has an isometric embedding into $n-1$ dimensional space by looking at each branch $k$ separately. For an arbitrary value of $k \in \{1,2, ..., K\}$, define $u_k(i,j) = \sqrt{b_k|p_{i,k} - p_{j,k}|}$. Obviously $(\Omega, u_k)$ is also a metric space. We shall prove that $(\Omega, u_k)$ has an isometric embedding into the Euclidean space. Using Theorem \ref{thm_Morgan}, we need to show the following two conditions are met:


\begin{enumerate}
\item \textit{Flatness:} Take an arbitrary ordered tuple with size $N \leq n$ from $(\Omega, u_k)$. Without loss of generality, we assume that the tuple consists of the first $N$ samples in $\Omega$, i.e. $(\omega_1, \omega_2, ..., \omega_{N})$. This means that the $i$th sample in the tuple has $p_{i,k}$ as its taxa proportion descending from branch $k$. According to Definition \ref{def_Morgan}, $\boldsymbol V$ is defined as 
\begin{equation} \label{Eij}
\boldsymbol V_{i,j} = b_k \big( |p_{i+1,k} - p_{1,k}| + |p_{j+1,k} - p_{1,k}| - |p_{i+1,k} - p_{j+1,k}| \big) / 2
\end{equation}

For flatness we need to show $|\boldsymbol V| \geq 0$. There are three possibilities on $p_{i,k}$'s:
\begin{enumerate}
\item If there exists $i$ such that $p_{i+1,k} = p_{1,k}$, then $\boldsymbol V_{i,j} = 0$ for all $j \Rightarrow |\boldsymbol V| = 0$.

\item If there exists $i$ and $j$ such that $p_{i+1,k} = p_{j+1,k}$, then the the $i$th and $j$th row of $\boldsymbol V$ are identical, leading to $|\boldsymbol V| = 0$.

\item If neither of the above is true, define a bijective sorting function $\tau: \{1,2,...,N-1 \} \rightarrow \{1,2,...,N-1 \}$ such that $p_{\tau(1)+1,k} < p_{\tau(2)+1,k} < ... < p_{\tau(N-1)+1,k}$. Furthermore, let $t = |\{p_{i+1,k}: p_{i+1,k} < p_{1,k} \text{ and } 1 \leq i \leq N-1 \}|$.

Let $\tilde{\boldsymbol V}$ be the matrix such that $\tilde{\boldsymbol V}_{i,j} = \boldsymbol V_{\tau(i),\tau(j)}$. Obviously $|\tilde{\boldsymbol V}| = |\boldsymbol V|$ and $\tilde{\boldsymbol V}$ is symmetric. Using (\ref{Eij}) and the definition of $\tau$, we see that the upper triangle of $\tilde{\boldsymbol V}$ satisfies the following properties:
\begin{enumerate}
\item If $i=j$, then $\tilde{\boldsymbol V}_{i,j} = b_k |p_{\tau(i)+1,k} - p_{1,k}|$
\item If $i<j\leq t$, then  $p_{\tau(i)+1,k}-p_{1,k} < p_{\tau(j)+1,k}-p_{1,k} < 0 \Rightarrow \tilde{\boldsymbol V}_{i,j} = b_k |p_{\tau(j)+1,k} - p_{1,k}|$
\item If $i \leq t < j$, then $(p_{\tau(i)+1,k}-p_{1,k})(p_{\tau(j)+1,k}-p_{1,k}) < 0 \Rightarrow \tilde{\boldsymbol V}_{i,j} = 0$.
\item If $t<i<j$, then $0 < p_{\tau(i)+1,k}-p_{1,k} < p_{\tau(j)+1,k}-p_{1,k} \Rightarrow \tilde{\boldsymbol V}_{i,j} = b_k |p_{\tau(i)+1,k} - p_{1,k}|$
\end{enumerate}

Combining the above properties of $\tilde{\boldsymbol V}$, we can write it in block form:
$$\tilde{\boldsymbol V} = \begin{pmatrix}
\tilde{\boldsymbol V}_1 & \mathbf 0 \\
\mathbf 0 & \tilde{\boldsymbol V}_2
\end{pmatrix}$$
where $\tilde{\boldsymbol V}_1 \in \mathbb R^{t \times t}$ and $\tilde{\boldsymbol V}_2 \in \mathbb R^{(N-1-t) \times (N-1-t)}$. According to Lemma \ref{lemma1}, $|\tilde{\boldsymbol V}_1| > 0$ and $|\tilde{\boldsymbol V}_2| > 0$. Therefore, $|\tilde{\boldsymbol V}| = |\boldsymbol V| > 0$
\end{enumerate}

\item \textit{Minimum dimension} The minimum dimension of such embedding is simply the largest $N$ such that $|\boldsymbol V| > 0$ for a certain tuple $(\kappa_0, \kappa_1, ..., \kappa_N)$ from $(\Omega, u_k)$. Notice that if all $p_{i,k}$'s are equal, then $N = 0$, leading to a trivial embedding into 0-dimensional space. 
\end{enumerate}

Now suppose $k^*$ satisfies that $p_{i,k^*}$ are all different for $1 \leq i \leq n$. According to the arguments above, $(\Omega, u_{k^*})$ has an isometric embedding into an Euclidean space. Furthermore, the minimum dimension of such embedding is $n-1$ since $|\boldsymbol V| > 0$ for the tuple $(\omega_1, \omega_2, ..., \omega_n)$ due to the argument in 1(c).

So far we have proven the existence of Euclidean embedding for each $(\Omega, u_k)$. Let $\boldsymbol \gamma_{k1}, ..., \boldsymbol \gamma_{kn}$ be the Euclidean vectors that embeds $(\Omega, u_k)$ with minimum dimension. For each $i$, we define $\boldsymbol\zeta_i$ by concatenating all $\boldsymbol \gamma_{ki}$ for $1\leq k \leq K$:
$$\boldsymbol \zeta_i = (\boldsymbol\gamma_{1i}', \boldsymbol\gamma_{2i}', ..., \boldsymbol\gamma_{Ki}')'$$
Since $u^2(i,j) = \sum_{k=1}^K u^2_k(i,j)$ for all $i$ and $j$, it follows that $(\boldsymbol\zeta_1, \boldsymbol\zeta_2, ..., \boldsymbol\zeta_n)$ would be the embedded Euclidean vectors that preserve the metric $u$. Furthermore, $\text{rank}(\boldsymbol\zeta_1, \boldsymbol\zeta_2, ..., \boldsymbol\zeta_n) \geq \text{rank} (\boldsymbol\gamma_{k^*1}, \boldsymbol\gamma_{k^*2}, ..., \boldsymbol\gamma_{k^*n}) = n-1$. It follows that the minimum dimension of $(\Omega, u)$'s embedding is $n-1$. Now choose $\boldsymbol \zeta_1$ as the origin so that embedded vector of $i$th element becomes $\tilde{\boldsymbol \zeta}_i = \boldsymbol\zeta_i - \boldsymbol\zeta_1$. Since $\text{rank}(\boldsymbol 0, \tilde{\boldsymbol \zeta}_2, \tilde{\boldsymbol \zeta}_3, ..., \tilde{\boldsymbol \zeta}_n) = n-1$, we can orthogonally project them onto $\mathbb R^{n-1}$, hence the existence of an Euclidean embedding with $n-1$ dimensions.

Let $\boldsymbol Q$ be an $n \times (n-1)$ matrix with $i$th row denoting the $n-1$ dimensional embedding of $i$th element in $(\Omega, u)$. Furthermore, assume each column of $\boldsymbol Q$ has mean zero, which has no impact on the Euclidean distance induced by $\boldsymbol Q$. The arguments provided in the previous paragraph shows that $\text{rank}(\boldsymbol Q) = n-1$. By definition, $\boldsymbol D_{i,j} = -\sum_{z=1}^{n-1} (\boldsymbol Q_{iz} - \boldsymbol Q_{jz})^2 / 2$, so 
\begin{align*}
\boldsymbol M = \boldsymbol{JDJ} = \boldsymbol{JQQ'J} = \boldsymbol{QQ'} 
\end{align*}
is positive semidefinite with rank $n-1$.

\end{proof}

\subsection*{Proof of Corollary \ref{cor1}}
\begin{proof}
Let $\boldsymbol Q$ be the same $n \times (n-1)$ matrix as defined above. For an arbitrary $\boldsymbol v \in \mathbb R^{n-m}$ and $\boldsymbol v \neq 0$, consider $\boldsymbol {v'LML'v} = \boldsymbol{(L'v)'M(L'v)}$. By definition of $\boldsymbol L$, we have $\boldsymbol 1_n \in ker(\boldsymbol L) = im(\boldsymbol L')^{\perp} \Rightarrow im(\boldsymbol L') \subset \mathbf 1^\perp$. 

Moreover, $\boldsymbol M \boldsymbol 1_n = \boldsymbol {QQ'} \boldsymbol 1_n = 0$ since $\boldsymbol Q$ is column-centered. Given that $\text{rank}(\boldsymbol M) = n-1$ from Theorem \ref{thm1}, it follows that $\boldsymbol 1_n$ is the only eigenvector of $\boldsymbol M$ corresponding to zero eigenvalue.

Combining the above two observations, we see that $im(\boldsymbol L')$ is a subspace of the space spanned by all eigenvectors of $\boldsymbol M$ that correspond to positive eigenvalues. Therefore, we have $\boldsymbol{(L'v)'M(L'v)} > 0 \Rightarrow \boldsymbol{LML'}$ is positive definite.

\end{proof}

\end{document}